\newtheorem{theorem}{Theorem}
\newtheorem{lemma}[theorem]{Lemma}
\DeclareMathOperator*{\argmin}{arg\,min}
\begin{document}
%
\title{A Novel Antenna Selection Scheme for Spatially Correlated Massive MIMO Uplinks with Imperfect Channel Estimation}

\author{\IEEEauthorblockN{De Mi,
Mehrdad Dianati,
Sami Muhaidat}
\IEEEauthorblockA{Centre for Communication Systems Research (CCSR)\\
University of Surrey, Guildford, UK\\
Email: {\{d.mi, m.dianati, s.muhaidat\}@surrey.ac.uk}
\and
\IEEEauthorblockN{Yan Chen}
\IEEEauthorblockA{Huawei Technologies, Co. Ltd.\\
Shanghai, China\\
Email: eeyanchen@huawei.com}}}


%



\maketitle

\begin{abstract}
We propose a new antenna selection scheme for a massive MIMO system with a single user terminal  and a base station with a large number of antennas. We consider a practical scenario where there is a realistic correlation among the antennas and imperfect channel estimation at the receiver side. The proposed scheme exploits the sparsity of the channel matrix for the effective selection of a limited number of antennas. To this end, we compute a sparse channel matrix by minimising the mean squared error. This optimisation problem is then solved by the well-known orthogonal matching pursuit algorithm.  
Widely used models for spatial correlation among the antennas and channel estimation errors are considered in this work. Simulation results demonstrate that when the impacts of spatial correlation and imperfect channel estimation introduced, the proposed scheme in the paper can significantly reduce complexity of the receiver, without degrading the system performance compared to the maximum ratio combining.
\end{abstract}


%
\IEEEpeerreviewmaketitle

\section{Introduction}

Multiple-input multiple-output (MIMO) system that employs a large number of antennas, known as massive MIMO systems \cite{T1, MSP2013}, has been recently proposed as a potential technique, for next generation wireless communication systems \cite{MMNG, Five5G}. Specifically, in \cite{T1}, more than 10 fold throughput improvement by the massive MIMO has been suggested to be achievable  compared to LTE (Long Term Evolution). Despite such potential, in practice, deployment of massive MIMO systems is hindered by practical challenges. Firstly, there is an inherent problem of spatial correlation among the antennas due to lack of possibility of allowing sufficient spacing among the antennas \cite{MSP2013, LargeMISO, MMIMOULDL}. Specifically, for the uplink transmission, the antenna correlation could be significant among all the antennas at the base station (BS) side due to the space limitation. In \cite{MSP2013}, it is shown that correlation among antennas can result in nearly negligible achievable capacity gains. In addition, the channel state information (CSI) is required in order to perform post processing at the receiver side for most of receiver implementations. Prior investigations show that imperfection in channel estimation can significantly degrade the system performance, especially for massive MIMO systems \cite{LargeMISO, MMIMOULDL}. 

For the massive MIMO uplinks, the impact of spatial correlation and imperfect channel estimation have been investigated in \cite{T1, MSP2013, LargeMISO, MMIMOULDL, MMIMOUL, rzg, rws}. More specifically, in \cite{T1}, it is shown that uplink combining schemes, such as maximum ratio combining (MRC), can have a reasonable performance, with knowledge of CSI for all antenna branches. However, the price to pay for such gain is the significantly increased implementation overhead and the complexity of the transceiver design for massive MIMO systems \cite{EEAS, AS2013}. In \cite{EEAS}, it is argued that cost-efficient antenna selection strategies can be employed to reduce the complexity and overhead of implementation, as well as to effectively maintain a reasonably high performance. Selection combining (SC) for uplink has been extensively studied in the literature such as \cite{r18, r21}, in the context of conventional MIMO systems. For example, the effect of imperfect channel estimation on the SC systems is investigated in \cite{r21}, but not for the large scale antenna systems.

An analysis of the MRC in massive MIMO uplinks under imperfect channel estimation is given in \cite{T1}. Exploiting sparsity, the work in \cite{r3} investigates antenna/relay selection for MIMO channels. However, this work of \cite{r3} does not take into account spatial correlation among antennas, as well as the impacts of imperfect CSI acquisition. Considering the spatial correlation and imperfect channel estimation, spatially correlated channel models in \cite{r16, r17, LargeMISO, r20} are considered as a good approximation for large scale antenna correlation, and channel estimation errors in \cite{r11, LargeMISO, r20} are applied to effectively model the imperfection caused by the practical channel estimation schemes.

The main contribution of this work is to propose an effective antenna selection combining scheme for spatially correlated single-user massive MIMO uplinks under the imperfect channel estimation, by applying a sparsely structured channel matrix at the BS side. The basic idea is to reduce the effective number of antennas that are used for combining. Consequently, the resulting effective channel matrix becomes sparse, in the sense that the corresponding entries to non-selected antennas are set to zero. This sparse channel matrix can be obtained by some approximation techniques that will be discussed later in this paper. Simulation results indicate that the proposed scheme can significantly reduce implementation complexity and overhead, e.g., it is shown that that only less than half number of antennas are required to achieve a performance level that is comparable to MRC scheme, when the effects of spatial correlation and imperfect channel estimation are taken into account.   

The rest of this paper is organised as follows. In Section~\ref{sec:sys}, we present the system model. The antenna selection algorithm is proposed in Section~\ref{sec:iid}. The proposed scheme is then extended to the spatially correlated channel model with imperfect channel estimation in Section~\ref{sec:sci}. Performance analysis of the proposed scheme and the relevant discussions are given in Section~\ref{sec:re} and \ref{sec:con} respectively.

\section{System Model}\label{sec:sys}
We consider an uplink system with a single user terminal (UT) with a single antenna and one BS with a large number of antennas. The number of antennas at the BS side is $M$, and the received signal vector is represented by
\begin{equation} \label{eq:y1}
\mathbf{y} = \mathbf{h}x + \mathbf{v},
\end{equation}
where $\mathbf{h} \in \mathbb{C}^{M \times 1}$ is the channel vector, and $x$ is the transmitted symbol with transmitted power ${\mathbb E} \left[xx^{*} \right] = \sigma_{x}^{2}$, where $\left(\cdot\right)^{*}$ denotes complex conjugate. In addition, at the receiver side, we introduce the additive white Gaussian noise (AWGN) vector $\mathbf{v} \in \mathbb{C}^{M \times 1}$, consisting of independent circularly symmetric complex Gaussian random variables with ${\mathbb E} \left[ \mathbf{v}\mathbf{v}^{H} \right] = \sigma_{v}^{2}\mathbf{I}_{M}$, where $\left(\cdot\right)^{H}$ denotes Hermitian transposition and $\mathbf{I}_{M}$ is the $M \times M$ identity matrix. Hence the signal-to-noise ratio (SNR) can be expressed as $\text{\textnormal{SNR}} = \sigma_{x}^{2}/\sigma_{v}^{2}$. Throughout this paper, we take into account the spatial correlation among the antennas and imperfect channel estimation,  described in the following. 
\subsection{Spatially Correlated Channel Model}
The spatially correlated channel $\mathbf{h}$ in the \eqref{eq:y1} can be characterised as following Kronecker model \cite{r16}
\begin{equation} \label{eq:H1}
\mathbf{h} = \mathbf{\Phi}_{r}^{1/2}\mathbf{h}_{i}\mathbf{\Phi}_{t}^{1/2},
\end{equation}
where the $\mathbf{h}_{i} \in \mathbb{C}^{M \times 1}$ is an uncorrelated complex channel vector whose entries are independent identically distributed (i.i.d) circularly symmetric complex Gaussian random variables with zero mean and unit variance. $\mathbf{\Phi}_{r}$ and $\mathbf{\Phi}_{t}$ determine the correlation between receiver antennas, and between transmitter antennas, respectively. Note that $\left(\cdot\right)^{1/2}$ in the \eqref{eq:H1} represents the Hermitian square root of a matrix. In the case of single antenna UT uplink transmissions, the correlation among the receiver antennas can be focused on. Notice that such assumption is valid for multiuser MIMO systems as well, since user terminals are autonomous \cite{MSP2013}. To this end, the spatially correlated channel vector can be given as
\begin{equation} \label{eq:H2}
\mathbf{h} = \mathbf{\Phi}_{r}^{1/2}\mathbf{h}_{i}.
\end{equation}

It is suggested that the exponential correlation model is a widely adopted approximation for the structure of the correlation matrix \cite{r16}, which can suitably evaluate the level of spatial correlation among antennas, 
as given by,
\begin{equation} \label{eq:r}
{\Phi}_{ij} = \left\{\begin{matrix}
\phi^{|j-i|}, & i \leq j \\ 
\left( \phi^{|j-i|}\right)^{*}, & i > j,
\end{matrix}\right.
\end{equation}
where ${\Phi}_{ij}$ is the entry of the receiver side correlation matrix $\mathbf{\Phi}_{r}$ and corresponds to the correlation between $i^{th}$ and $j^{th}$ receiver antenna. A single coefficient $\phi$ is also introduced, with $|\phi| \leq 1$, where, here and in \eqref{eq:r}, $|\cdot|$ denotes the absolute value operation. Hereafter we assume that the $M \times M$ correlation matrices $\mathbf{\Phi}_{r}$ is known, due to the fact that it is supposed to be less frequently varying than the channel matrix. Furthermore, the distribution of $\mathbf{h}_{i}$ is known to the receiver \cite{r17}, and $\mathbf{h}_{i}$ stays constant and is independent of the transmitted symbol $x$ and noise vector $\mathbf{v}$ during one transmission period.

\subsection{Imperfect Channel Estimation}
In practice, the channel is estimated at the receiver, by applying different channel estimation schemes such as MMSE-based pilot signalling estimation, which can introduce estimation errors. Since the correlation matrices are assumed to be available, the channel estimation can be applied for the uncorrelated channel component $\mathbf{h}_{i}$. The imperfect estimate $\hat{\mathbf{h}}_{i}$ of the $\mathbf{h}_{i}$ can be modelled as \cite{r11}
\begin{equation}
\hat{\mathbf{h}}_{i} = \sqrt{1-\tau}\mathbf{h}_{i} + \sqrt{\tau}\mathbf{e}_{i},
\end{equation} 
where $\mathbf{e}_{i}$ is the estimation error. It is suggested that $\mathbf{e}_{i}$ can be  independent of $\mathbf{h}_{i}$, due to the property of the MMSE estimator \cite{LargeMISO}, whose entries are i.i.d zero mean circularly symmetric complex Gaussian random variables. Here the estimation variance parameter $\tau \in [0,1]$ represents the estimation accuracy, i.e., $\tau = 1$ represents the extreme case that there is not correlation between the estimation of $\mathbf{h}_{i}$ and its actual value, whereas $\tau = 0$ corresponds to the perfect channel estimation without error\cite{LargeMISO}. Recalling \eqref{eq:H2}, the channel estimate $\hat{\mathbf{h}}$ can be further expressed as \cite{LargeMISO, r20}
\begin{align}
\label{eq:He1}
\hat{\mathbf{h}} &= \mathbf{\Phi}_{r}^{1/2}\hat{\mathbf{h}}_{i},\\
\label{eq:He2}
& = \sqrt{1-\tau}\mathbf{h} + \mathbf{e},
\end{align}
where $\mathbf{e} =  \sqrt{\tau}\mathbf{\Phi}_{r}^{1/2}\mathbf{e}_{i}$. Then, the effect of both antenna spatial correlation and imperfect channel estimation can be investigated, by adjusting the correlation coefficient $\phi$ and estimation variance parameter $\tau$.

\section{Multiple Antenna Selection Problem Formulation for the Uncorrelated Channel}
\label{sec:iid}
We first consider a single UT equipped with a single antenna at the transmitter side for the uncorrelated i.i.d channel network. 
In order to realise the multiple receiver antenna selection, here we introduce an antenna selection vector $\mathbf{h}_{s,i} \in \mathbb{C}^{M \times 1}$, which can also be considered as an equalisation vector, due to the fact that each receiver antenna is weighted by a corresponding channel coefficient in the vector $\mathbf{h}_{s,i}$. Considering the expression of received signal in \eqref{eq:y1}, the equalised signal can be given as $\hat{y}$, after we apply the antenna selection vector, as 
\begin{equation}
\hat{y} = \mathbf{h}_{s,i}^{H}(\mathbf{h}_{i}x + \mathbf{v}).
\end{equation}
Based on the equalised signal structure, the antenna selection can be obtained by minimising the mean squared error (MSE) at the receiver. To achieve this, we define the error signal as
\begin{align}
e & = x - \hat{y} \notag \\
& = x - \mathbf{h}_{s,i}^{H}(\mathbf{h}_{i}x+ \mathbf{v}).
\end{align}
By exploiting the structure of the error signal, the MSE can be formulated as
\begin{align}
\text{\textnormal{MSE}} & \coloneqq E[\left\| e \right\|^{2}] \notag\\ 
&= \sigma_{x}^{2} - \mathbf{h}_{s,i}^{H}\mathbf{h}_{i}\sigma_{x}^{2} - \sigma_{x}^{2}\mathbf{h}_{i}^{H}\mathbf{h}_{s,i} \notag\\
 &\qquad + \mathbf{h}_{s,i}^{H}\sigma_{x}^{2}\mathbf{h}_{i}\mathbf{h}_{i}^{H}\mathbf{h}_{s,i} + \mathbf{h}_{s,i}^{H}\sigma_{v}^{2}\mathbf{I}_{M}\mathbf{h}_{s,i},
\end{align}
where ``$\coloneqq$" is the definition sign. We then let 
\begin{align}
&\label{eq:hi} \tilde{\mathbf{h}}_{i} = \sigma_{x}^{2}\mathbf{h}_{i}, \\
&\label{eq:Ri} \mathbf{R}_{i} = \sigma_{x}^{2}\mathbf{h}_{i}\mathbf{h}_{i}^{H} + \sigma_{v}^{2}\mathbf{I}_{M}.
\end{align} 
Notice that $\mathbf{R}_{i}$ is positive definite, we apply Cholesky decomposition as $\mathbf{R}_{i} = \mathbf{L}_{i}\mathbf{L}_{i}^{H}$ where $\mathbf{L}_{i}$ is one $M \times M$ lower-triangular matrix. The expression of $\text{\textnormal{MSE}}$ can then be written as
\begin{align}
\text{\textnormal{MSE}} &=  \sigma_{x}^{2} - \mathbf{h}_{s,i}^{H}\mathbf{L}_{i}\mathbf{L}_{i}^{-1}\tilde{\mathbf{h}}_{i} \notag \\ &\qquad - \tilde{\mathbf{h}}_{i}^{H}\mathbf{L}_{i}^{-H}\mathbf{L}_{i}^{H}\mathbf{h}_{s,i}  + \mathbf{h}_{s,i}^{H}\mathbf{L}_{i}\mathbf{L}_{i}^{H} \mathbf{h}_{s,i}\notag\\ 
\label{eq:MSE2}
&= \sigma_{x}^{2} - \tilde{\mathbf{h}}_{i} ^{H}\mathbf{L}_{i}^{-H}\mathbf{L}_{i}^{-1}\tilde{\mathbf{h}}_{i}  + \left\| \mathbf{L}_{i}^{H}\mathbf{h}_{s,i} - \mathbf{L}_{i}^{-1}\tilde{\mathbf{h}}_{i} \right\|^{2}_{2}.
\end{align}

The only term in \eqref{eq:MSE2} related to the antenna selection vector $\mathbf{h}_{s,i}$, which can be further processed, is the last term, i.e., the $L_{2}$ norm (denoted by $\|\cdot\|_{2}$). Such a minimisation problem can be efficiently solved by using sparse approximation algorithms such as the orthogonal matching pursuit (OMP) algorithm, which has been shown that it outperforms the conventional SNR-based selection combining scheme in \cite{r3}. More specifically, since the vector $\mathbf{h}_{s,i}$ reflects the receiver antenna selection process, the only non-zero entries of $\mathbf{h}_{s,i}$ correspond to the selected receiver antenna (i.e., $\mathbf{h}_{s,i}$ becomes a sparsely structured vector). Hence, the acquisition of $\mathbf{h}_{s,i}$ transforms to a sparse approximation problem. We formulate this sparse approximation problem by generating a link between the sparse approximation and the MSE optimisation: the objective can be the minimisation of the $L_{2}$ norm, and the measurement dictionary and the target vector are $\mathbf{L}_{i}^{H}$ and $\mathbf{L}_{i}^{-1}\tilde{\mathbf{h}}_{i}$, respectively. In the OMP algorithm, an iterative calculation process is carried out to locate one column vector in the measurement dictionary that is the most correlated vector to the residual vector (which is generally initialised to be the target vector), at each iteration. One locally optimum solution is measured by solving a least-squared problem to update the residual vector. Here, for the sake of simplicity, we highlight the parameters in the algorithm relating to this work. The inputs of the OMP process are the measurement dictionary $\mathbf{L}_{i}^{H}$ and the target vector $\mathbf{L}_{i}^{-1}\tilde{\mathbf{h}}_{i}$, as well as a stopping criterion. Here the stopping criterion is selected as the desired number of iterations for the OMP algorithm, named $K_{s}$. We denote the proposed OMP algorithm as
\begin{align}
\label{eq:hsi}
\mathbf{h}_{s,i} & = \argmin_{\mathbf{h}_{s,i} |_{\text{\textnormal{OMP}}}}  \left\| \mathbf{L}_{i}^{H}\mathbf{h}_{s,i} - \mathbf{L}_{i}^{-1}\tilde{\mathbf{h}}_{i} \right\|_{2}, \notag \\
& \text{\textnormal{s.t.}} \left\| \mathbf{h}_{s,i} \right\|_{0} = K_{s},
\end{align}
where $\text{\textnormal{s.t.}}$ stands for ``subject to", $\| \cdot \|_{0}$ represents the $L_{0}$ norm, also informally the number of non-zero entries in a vector, and $\mathbf{h}_{s,i} |_{\text{\textnormal{OMP}}}$ refers to the value of $\mathbf{h}_{s,i}$ calculated by OMP algorithm. Notice that at the end of each iteration, the optimum solution is obtained, corresponding to one selection process of $\mathbf{h}_{s,i}$. Therefore, the stopping criterion $K_{s}$ also indicates the desired number of selected receiver antennas, and multiple antenna selection can be realised by using the sparsely structured antenna selection vector generated by the \eqref{eq:hsi}.

\section{Spatial Correlated Channel with Imperfect Channel Estimation}
\label{sec:sci}
In this section, we extend the OMP operation based antenna selection scheme taking into account spatial correlation among the antennas and imperfect channel estimation. 
Then, in order to use the OMP algorithm in \eqref{eq:hsi} to realise the multiple antenna selection in the spatially correlated channel, we generalise the expression of $\tilde{\mathbf{h}}_{i}$ in \eqref{eq:hi} and $\mathbf{R}_{i}$ in \eqref{eq:Ri} to 
\begin{align}
&\label{eq:hc} \tilde{\mathbf{h}} = \sigma_{x}^{2}\mathbf{h} = \sigma_{x}^{2}\mathbf{\Phi}_{r}^{1/2}\mathbf{h}_{i},\\
&\label{eq:Rc} \mathbf{R} = \sigma_{x}^{2}\mathbf{h}\mathbf{h}^{H} + \sigma_{v}^{2}\mathbf{I}_{M} = \sigma_{x}^{2}\mathbf{\Phi}_{r}^{1/2}\mathbf{h}_{i}\mathbf{h}_{i}^{H}\mathbf{\Phi}_{r}^{H/2} + \sigma_{v}^{2}\mathbf{I}_{M}.
\end{align}

The exponential correlation matrix $\mathbf{\Phi}_{r}$ in \eqref{eq:Rc} is accordingly a positive semidefinite matrix \cite{r17}, so it is necessary to verify the positive definiteness of $\mathbf{R}$ for its availability of Cholesky decomposition. To do so, we introduce the following lemma, and further define the $\mathbf{\Phi}_{r}$ as a symmetric real positive semidefinite matrix (i.e., $\phi \in [0,1)$).
\begin{lemma}
\label{pd}
Let $\mathbf{\Phi}_{r}$ be a symmetric real positive semidefinite matrix, and $\mathbf{R}_{h} = \mathbf{h}_{i}\mathbf{h}_{i}^{H}$ be a positive semidefinite matrix. Then $\mathbf{R}$ is positive definite.
\end{lemma}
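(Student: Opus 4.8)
The plan is to write $\mathbf{R} = \sigma_x^2\,\mathbf{\Phi}_r^{1/2}\mathbf{R}_h\,\mathbf{\Phi}_r^{H/2} + \sigma_v^2\mathbf{I}_M$ and show that for any nonzero $\mathbf{z}\in\mathbb{C}^{M\times 1}$ we have $\mathbf{z}^H\mathbf{R}\,\mathbf{z} > 0$. First I would expand the quadratic form as
\begin{equation}
\mathbf{z}^H\mathbf{R}\,\mathbf{z} = \sigma_x^2\,\mathbf{z}^H\mathbf{\Phi}_r^{1/2}\mathbf{R}_h\,\mathbf{\Phi}_r^{H/2}\mathbf{z} + \sigma_v^2\,\mathbf{z}^H\mathbf{z}. \notag
\end{equation}
The second term is $\sigma_v^2\|\mathbf{z}\|_2^2$, which is strictly positive whenever $\mathbf{z}\neq\mathbf{0}$ (recalling $\sigma_v^2 > 0$). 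For the first term, set $\mathbf{w} = \mathbf{\Phi}_r^{H/2}\mathbf{z}$; since $\mathbf{R}_h = \mathbf{h}_i\mathbf{h}_i^H$ is positive semidefinite, $\mathbf{w}^H\mathbf{R}_h\,\mathbf{w} = |\mathbf{h}_i^H\mathbf{w}|^2 \geq 0$, so the first term is nonnegative (and $\sigma_x^2 > 0$). Hence $\mathbf{z}^H\mathbf{R}\,\mathbf{z}$ is the sum of a nonnegative term and a strictly positive term, so it is strictly positive, which establishes that $\mathbf{R}$ is positive definite.

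I would also note that $\mathbf{R}$ is Hermitian: $\mathbf{\Phi}_r^{1/2}$ is Hermitian by the convention fixed after \eqref{eq:H1} (and in fact, once $\mathbf{\Phi}_r$ is taken symmetric real positive semidefinite, $\mathbf{\Phi}_r^{1/2}$ can be taken symmetric real as well, so $\mathbf{\Phi}_r^{H/2} = \mathbf{\Phi}_r^{1/2}$), and $\mathbf{R}_h$ is Hermitian, so the product $\mathbf{\Phi}_r^{1/2}\mathbf{R}_h\,\mathbf{\Phi}_r^{H/2}$ together with the real multiple of the identity is Hermitian. This is needed so that ``positive definite'' is meaningful in the usual sense and so that the Cholesky decomposition invoked after the lemma applies.

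There is no real obstacle here: the result is essentially the observation that adding a positive multiple of the identity to any positive semidefinite (Hermitian) matrix yields a positive definite matrix, combined with the fact that a congruence $\mathbf{A}\mapsto\mathbf{B}^H\mathbf{A}\mathbf{B}$ preserves positive semidefiniteness. The only point requiring a word of care is the implicit assumption $\sigma_v^2 > 0$, i.e.\ that there is genuinely nonzero noise; this holds under the SNR model of Section~\ref{sec:sys}. The restriction to $\phi\in[0,1)$ and to real symmetric $\mathbf{\Phi}_r$ is not actually needed for positive definiteness of $\mathbf{R}$ as argued above — it only serves to make $\mathbf{\Phi}_r^{1/2}$ a convenient real object — so I would keep the proof stated for general positive semidefinite $\mathbf{\Phi}_r$ and $\mathbf{R}_h$ as in the lemma hypothesis.
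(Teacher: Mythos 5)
Your proof is correct and follows essentially the same route as the paper: the correlated term $\mathbf{\Phi}_{r}^{1/2}\mathbf{R}_{h}\mathbf{\Phi}_{r}^{H/2}$ is positive semidefinite (you make this explicit via the rank-one form $|\mathbf{h}_{i}^{H}\mathbf{w}|^{2}\geq 0$), and adding $\sigma_{v}^{2}\mathbf{I}_{M}$ with $\sigma_{v}^{2}>0$ makes the quadratic form strictly positive for every nonzero $\mathbf{z}$. Your added remarks (Hermitianness of $\mathbf{R}$, and the observation that the real-symmetric restriction on $\mathbf{\Phi}_{r}$ is not needed for the conclusion) are sound refinements but do not change the argument, which matches the paper's.
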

\begin{proof}
Since $\mathbf{\Phi}_{r}$ is positive semidefinite, then its square root $\mathbf{\Phi}_{r}^{1/2}$ is positive semidefinite as well. In addition, $\mathbf{\Phi}_{r}$ is a symmetric real matrix, then $\mathbf{\Phi}_{r}$ is equal to its own conjugate transpose $\mathbf{\Phi}_{r}^{\dagger}$ (i.e., $\mathbf{\Phi}_{r}^{H}$), where $(\cdot)^{\dagger}$ represents conjugate transpose operation. Due to the property of the positive semi/definite matrix, it is easy to prove $\mathbf{\Phi}_{r}^{1/2}\mathbf{R}_{h}\mathbf{\Phi}_{r}^{H/2}$ is positive semidefinite, equivalently to 
\begin{equation}
\mathbf{z}^{\dagger}\mathbf{\Phi}_{r}^{1/2}\mathbf{R}_{h}\mathbf{\Phi}_{r}^{H/2}\mathbf{z} \geq \mathbf{0}, \forall \mathbf{z} \in \{ \mathbf{z} \in \mathbb{C}^{M \times 1} | \mathbf{z} \neq \mathbf{0}\}.
\end{equation}
Thus,
\begin{equation}
\mathbf{z}^{\dagger}(\mathbf{\Phi}_{r}^{1/2}\mathbf{R}_{h}\mathbf{\Phi}_{r}^{H/2} + \sigma_{v}^{2}\mathbf{I}_{M})\mathbf{z} \geq \mathbf{0}, \forall \mathbf{z} \in \{ \mathbf{z} \in \mathbb{C}^{M \times 1} | \mathbf{z} \neq \mathbf{0}\}.
\end{equation}
In this case, $\sigma_{v}^{2} > 0$, thus there exists no such one non-zero complex vector $\mathbf{z}_{e}$,  that $\mathbf{z}_{e} \in \{ \mathbf{z} \in \mathbb{C}^{M \times 1} | \mathbf{z} \neq \mathbf{0}\}$, let $\mathbf{z}_{e}^{\dagger}(\mathbf{\Phi}_{r}^{1/2}\mathbf{R}_{h}\mathbf{\Phi}_{r}^{H/2} + \sigma_{v}^{2}\mathbf{I}_{M})\mathbf{z}_{e} = \mathbf{0}$. Therefore,
\begin{equation}
\mathbf{z}^{\dagger}(\mathbf{\Phi}_{r}^{1/2}\mathbf{R}_{h}\mathbf{\Phi}_{r}^{H/2} + \sigma_{v}^{2}\mathbf{I}_{M})\mathbf{z}  > \mathbf{0}, \forall \mathbf{z} \in \{ \mathbf{z} \in \mathbb{C}^{M \times 1} | \mathbf{z} \neq \mathbf{0}\},
\end{equation}
which indicates that $\mathbf{R}$ is positive definite, as required.
\end{proof}

Based on Lemma~\ref{pd}, it can be proved that $\mathbf{R}$ in \eqref{eq:Rc} is positive definite, and the multiple antenna selection with the receiver side spatially correlated channel can be realised, by measuring revised sparse antenna selection vector $\mathbf{h}_{s,c}$, instead of $\mathbf{h}_{s,i}$ in  \eqref{eq:hsi}, and the relative components in the OMP algorithm. More specifically, we have the generalised $\tilde{\mathbf{h}}$ in \eqref{eq:hc} and $\mathbf{R}$ in \eqref{eq:Rc}, and $\mathbf{L}$ is the $M \times M$ lower-triangular matrix from Cholesky decomposed $\mathbf{R}$. Correspondingly, the measurement dictionary and the target vector become  $\mathbf{L}^{H}$ and $\mathbf{L}^{-1}\tilde{\mathbf{h}}$ respectively. We rewrite the structure of $\mathbf{h}_{s,c}$ as 
\begin{align}
\label{eq:hsc}
\mathbf{h}_{s,c}  & = \argmin_{\mathbf{h}_{s,c} |_{\text{\textnormal{OMP}}}}  \left\| \mathbf{L}^{H}\mathbf{h}_{s,c} - \mathbf{L}^{-1}\tilde{\mathbf{h}} \right\|_{2}, \notag \\
& \text{\textnormal{s.t.}} \left\| \mathbf{h}_{s,c} \right\|_{0} = K_{s},
\end{align}
by considering the same stopping criterion in the OMP operation as \eqref{eq:hsi}, i.e., the number of selected antennas. \*

Recall the Equation~\eqref{eq:He1} and \eqref{eq:He2}, we now consider the case with imperfect channel estimation. Under the same assumption of a single antenna UT uplink transmission, only the channel estimate vector $\hat{\mathbf{h}}$ is available to the receiver. Generalise the  $\tilde{\mathbf{h}}$ and $\mathbf{R}$ to $\tilde{\mathbf{h}}_{e}$ and $\mathbf{R}_{e}$, respectively, which can be given as  
\begin{equation}
\tilde{\mathbf{h}}_{e} = \sigma_{x}^{2}\hat{\mathbf{h}}
\end{equation} 
\begin{equation}
\mathbf{R}_{e} = \sigma_{x}^{2}\hat{\mathbf{h}}\hat{\mathbf{h}}^{H} + \sigma_{v}^{2}\mathbf{I}_{M}.
\end{equation}
In a similar way to that provided in Lemma~\ref{pd}, it is evident that the positive definiteness of $\mathbf{R}_{e}$ and the its availability of Cholesky decomposition can be satisfied. We can allocate the parameters for the OMP algorithm with imperfect channel estimation as
\begin{align}
\label{eq:hse}
\mathbf{h}_{s,e}  & = \argmin_{\mathbf{h}_{s,e} |_{\text{\textnormal{OMP}}}}  \left\| \mathbf{L}_{e}^{H}\mathbf{h}_{s,e} - \mathbf{L}_{e}^{-1}\tilde{\mathbf{h}}_{e} \right\|_{2}, \notag \\
& \text{\textnormal{s.t.}} \left\| \mathbf{h}_{s,e} \right\|_{0} = K_{s},
\end{align} 
where the $\mathbf{h}_{s,e}$ is the updated version of $\mathbf{h}_{s,c}$ in \eqref{eq:hsc} with consideration of channel estimation error, and $\mathbf{L}_{e}$ is the $M \times M$ lower-triangular matrix generated by the Cholesky decomposition of $\mathbf{R}_{e}$. Again, the stopping criterion is the desired number of selected antennas. 

\section{Simulation Results}
\label{sec:re}
In this section, we compare a series of bit error rate (BER) performances of our proposed scheme with MRC scheme. The system consisting of one single-antenna UT and one BS with a large number of antennas is considered. More specifically, we assume $M =$ 16, 64 or 128. BPSK modulation is applied in our simulations. The effect of sparsity of the antenna selection vector, antenna spatial correlation and imperfect channel estimation can be taken into account by adjusting the value of the parameter $K_{s}$, $\phi$ and $\tau$ in our programme.

\begin{figure}[!t]
\centering
\includegraphics[width=3.5in]{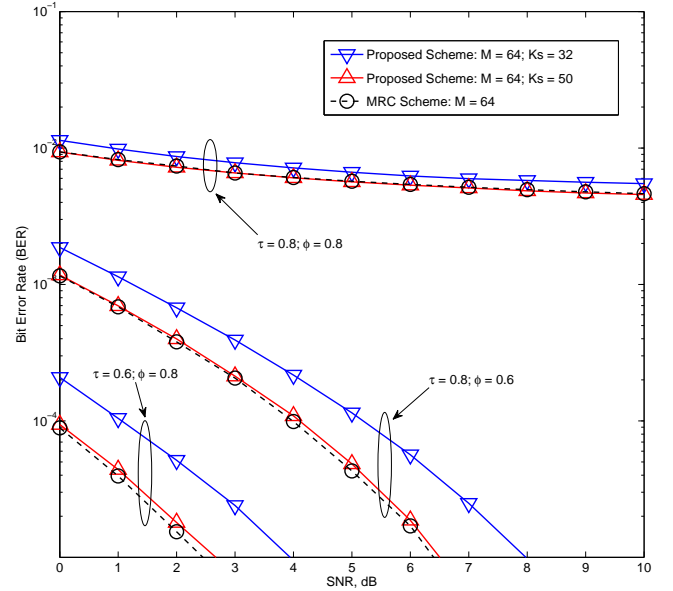}
\caption{BER versus SNR comparison between our proposed scheme and MRC scheme for a large number of receive antennas ($M =$ 64), with different levels of $K_{s}$, $\tau$ and $\phi$, and BPSK modulation.}
\label{fig:fig1}
\end{figure}

Fig.~\ref{fig:fig1} demonstrates the BER performance of the both schemes with different SNR per bit levels. The total number of BS antennas $M$ is set to 64, and correspondingly, we select the half number, i.e., $K_{s}$ equals to 32 out of 64, and more than half number of the BS antennas, i.e., $K_{s}$ is equal to 50 out of 64. Also, we examine several combinations of $\phi$ and $\tau$. It is not surprising to observe that the both schemes are considerably impacted by the high level of $K_{s}$, $\phi$ and $\tau$. However, due to the effective antenna selection process in our algorithms that can minimise the effect of highly correlated channels as well as the channel estimation error during the transmission, our proposed scheme with larger number of selected antennas (i.e., $K_{s}$ = 50) has nearly same performance as MRC, and the gap between the results of MRC and our method with only half antennas selected is fairly negligible. Notice that we show the case with high levels of antenna correlation and channel estimation error (e.g., $\phi$ and $\tau$ equal to 0.6 or even 0.8). In fact, such highly correlated channels can be experienced in our system since the very large BS antenna equipped. In addition, the high level of channel estimation error can be certainly introduced, due to the realistic transmission conditions such as limited feedback and high mobility of UT. 

\begin{figure}[!t]
\centering
\includegraphics[width=3.5in]{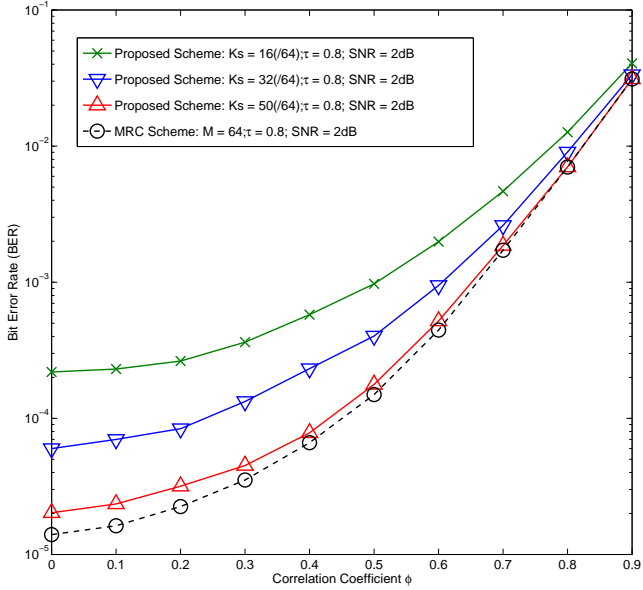}
\caption{BER versus $\phi$ performance comparison for our scheme and MRC with ($M =$ 64) and high estimation error (i.e., $\tau$ = 0.8), and different levels of $K_{s}$, in the low SNR regime (SNR = 2dB). BPSK applied.}
\label{fig:fig2}
\end{figure}

\begin{figure}[!t]
\centering
\includegraphics[width=3.5in]{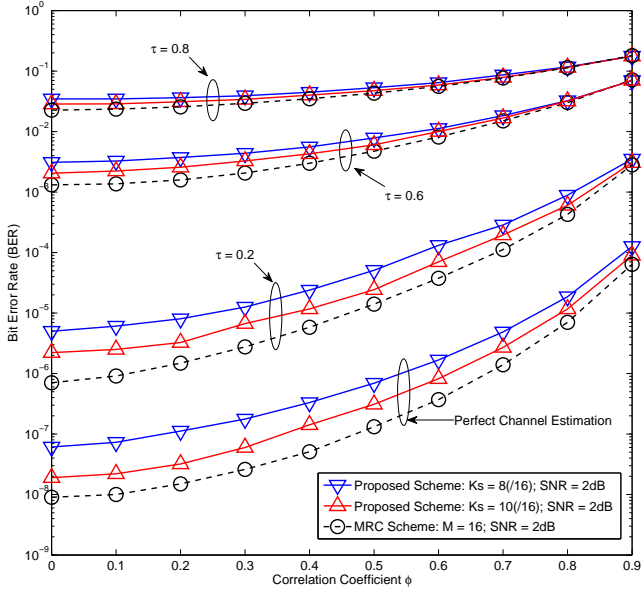}
\caption{BER versus $\phi$ performance comparison for our scheme and MRC with ($M =$ 16), and different levels of $K_{s}$ and $\tau$, in the low SNR regime (SNR = 2dB). BPSK applied.}
\label{fig:fig3}
\end{figure}

After the general observation of the performance in Fig.~\ref{fig:fig1}, now we focus on the effect of different combinations of $\tau$ and $\phi$, and the required number of selected antennas, shown in Fig.~\ref{fig:fig2} and Fig.~\ref{fig:fig3} respectively. First, Fig.~\ref{fig:fig2} illustrates the BER performance of the case, with $M$ = 64, $K_{s}$ = 16, 32 or 50, and $\tau$ = 0.8, by viewing a different aspect from Fig.~\ref{fig:fig1}, i.e., with different levels of $\phi$ and in the low SNR regime (SNR = 2dB). It is shown that our scheme has very similar performance with MRC, especially in the high region of $\phi$. In order to take a closer look of the performance with lower $\tau$, in Fig.~\ref{fig:fig3}, we choose a lower number of $M$, equals to 16, and select 8 or 10 antennas out of 16. The conclusion holds as well that the compared to the MRC, the performance of our proposed scheme is not degraded by combining only selected antennas, with high levels of $\tau$ and $\phi$ involving. Then, in the interest of high levels of antenna spatial correlation ($\phi$ = 0.8) and imperfect channel estimation ($\tau$ = 0.8), Fig.~\ref{fig:fig4} shows the BER performance versus the number of selected antenna $K_{s}$ of our scheme and MRC, with different levels of SNR. For the high SNR regime, the BER performance of our scheme is closely approached to that of MRC for $M$ = 64 is around 35. For the low SNR regime, approximately measuring, the required number of selected antenna $K_{s}$ is equal to 60 for $M$ = 128, or only 30 for $M$ = 64. It is suggested that when the bad transmission condition introduced in our system, e.g., low SNR regime and high levels of $\phi$ and $\tau$, our proposed scheme has similar, even identical performance as the MRC scheme, with less than half antennas selected, due to the effective selection process designed for different transmission situations.

\begin{figure}[!t] 
    \centering
    \subfigure[BER vs $K_{s}$; $M$ = 64]
    {
        \includegraphics[width=3.5in]{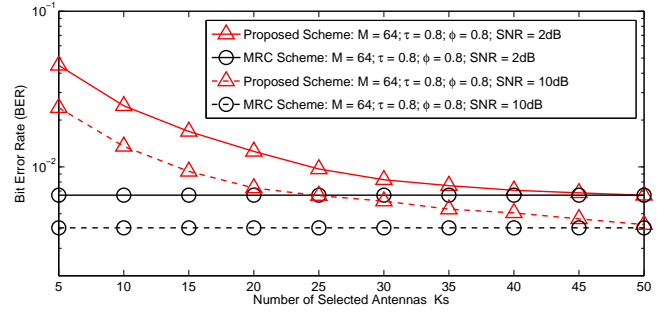}
        \label{fig:first_sub}
    }
    \\
    \subfigure[BER vs $K_{s}$; $M$ = 128]
    {
        \includegraphics[width=3.5in]{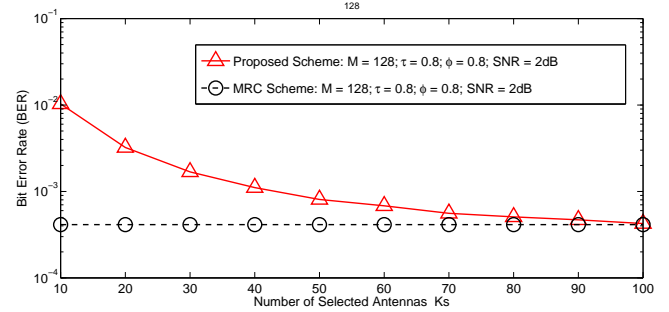}
        \label{fig:second_sub}
    }
\caption{BER versus $K_{s}$ comparison between our proposed scheme and MRC scheme for different number of receive antennas ($M=$ 64 or 128), with high $\tau$ and $\phi$ introduced for different SNR levels. BPSK applied.}
\label{fig:fig4}
\end{figure}


\subsection{Complexity Analysis}
The MRC algorithm requires a number of signal processing for entire diversity channels, which significantly increases the hardware complexity and cost due to the implementation of RF chains for all antennas in the massive MIMO system \cite{AS2013}. Instead, our proposed selection scheme allows the receiver to restore the signal to its original shape, only by weighting few (e.g., even less than the half number of antennas that shown in Fig.~\ref{fig:fig1} and \ref{fig:fig4}) selected channels with the sparsely structured antenna selection vector, and without degrading the system performance, which is a dramatic improvement in reducing the implementation overhead, e.g., the required number of RF chains, in practice. Consider the OMP algorithm presented in \eqref{eq:hsi}, \eqref{eq:hsc} and \eqref{eq:hse}, the input components are based on the channel estimation, which can be physically performed on each antenna with a less complex device rather than the full transceiver \cite{AS2013}. Then the antenna selection can be realised by using the output vector, i.e., the $M\times1$-dimensional antenna selection vector with only $K_{s}$ nonzero elements. In addition, the iteration times is equal to the stopping criterion $K_{s}$. Hence, the computational complexity of the OMP algorithms is $\mathcal{O}(K_{s}^{2}M)$.

\section{Discussion}
\label{sec:con}
Throughout this work, we proposed a new antenna selection scheme for the single-user massive MIMO uplink transmission by applying the sparsely structured antenna selection vector, and then generalised our proposed scheme with the consideration of spatial correlation and imperfect channel estimation. Numerical simulation results show that when the severe transmission condition is experienced in our system, such as very low SNR regime, highly correlated channel and considerable estimation error, our proposed scheme has closely approached performance as the well-adopted MRC scheme, but requiring few selected antennas, due to the effective selection process by applying the sparsely structured antenna selection vector, which can significantly reduce the implementation overhead.\*

Furthermore, due to space limitations, we present our system model here as single user systems, and it is being considered to emphasise upon MU-MIMO in the journal version of this work, e.g., the extension to a single-cell multiuser scenario can be achieved by considering that users independently transmit data to the base station.


\section*{Acknowledgment}
This work has been done within joint project, supported by Huawei Tech. Co., Ltd, China.




\bibliographystyle{IEEEtran}

\bibliography{IEEEabrv,DMref}

%




\end{document}